\documentclass[preprint,12pt]{elsarticle}

\usepackage{amsmath,amssymb,amsthm,mathtools}
\usepackage{setspace}
\usepackage{hyperref}
\usepackage{enumitem}
\usepackage{booktabs}
\usepackage{xcolor}
\usepackage{graphicx}
\usepackage{float}
\usepackage{threeparttable}
\usepackage{tabularx}
\usepackage{array}
\usepackage[ruled,linesnumbered]{algorithm2e}

\graphicspath{{figures/}}

\newcommand{\thead}[1]{\multicolumn{1}{c}{#1}}

\newcolumntype{R}{>{\raggedleft\arraybackslash}X}
\newcolumntype{L}{>{\raggedright\arraybackslash}X}

\SetAlFnt{\small}
\SetAlCapFnt{\normalsize}
\SetAlCapNameFnt{\normalsize}

\hypersetup{%
  colorlinks=true,%
  linkcolor=blue, citecolor=blue, urlcolor=blue,%
  pdftitle={When to Trust Confidence Thresholding: Calibration Diagnostics for Pseudo-Labelled Regression}%
}

\newtheorem{theorem}{Theorem}
\newtheorem{proposition}[theorem]{Proposition}

\theoremstyle{definition}

\theoremstyle{remark}
\newtheorem{remark}{Remark}

\newcommand{\E}{\mathbb{E}}

\newcommand{\Var}{\operatorname{Var}}

\newcommand{\R}{\mathbb{R}}
\newcommand{\N}{\mathcal{N}}
\newcommand{\1}{\mathbf{1}}

\newcommand{\dto}{\xrightarrow{\;d\;}}
\newcommand{\plim}{\operatorname{plim}}
\newcommand{\sgn}{\operatorname{sgn}}
\newcommand{\Vs}{V^{*}}

\journal{arXiv}

\begin{document}
\begin{frontmatter}
\title{When to Trust Confidence Thresholding: Calibration Diagnostics for Pseudo-Labelled Regression}

\author[ucl]{Marcell T.\ Kurbucz\corref{cor1}}

\ead{m.kurbucz@ucl.ac.uk}

\cortext[cor1]{Corresponding author.}

\affiliation[ucl]{organization={Institute for Global Prosperity, The Bartlett, University College London},%
             addressline={9--11 Endsleigh Gardens},%
             city={London},%
             postcode={WC1H 0EH},%
             country={United Kingdom}}

\begin{abstract}
\small
Calibrated probability outputs of trained classifiers are increasingly used as inputs to downstream regression estimands such as effects, prevalences, or disparities for a latent group observed only on a small labelled subset. A standard practice is to threshold the calibrated score at a confidence cutoff and treat the hard label as the truth. Building on a recent identification result for the underlying moment equation, we develop a calibration-aware diagnostic apparatus for pseudo-labelling pipelines. We derive a closed-form expression for the attenuation bias that confidence thresholding induces in the downstream regression coefficient, and show that the bias can be predicted, before any inference is run, from the residual score variance $\Vs=\E[\Var(p\mid X)]$ on the unlabelled set after partialling out the downstream controls~$X$. We further obtain a sharp sensitivity bound under bounded calibration drift, and identify the boundary $\Vs=0$, which holds iff $p$ is a deterministic function of $X$; this motivates a structural separation between classifier features~$W$ and downstream controls $X\subsetneq W$. Five controlled simulations and a UCI Adult illustration trace the predictions. The contribution is operational: a $(\Vs, \kappa)$ decision rule that practitioners can compute from any classifier output to decide whether confidence thresholding is safe.
\end{abstract}

\begin{keyword}
\small
pseudo-labelling \sep calibration \sep semi-supervised inference \sep diagnostic \sep distribution shift
\end{keyword}
\end{frontmatter}

\section{Introduction}
\label{sec:intro}

\noindent
Modern machine-learning pipelines increasingly emit calibrated probability scores as primary outputs rather than intermediate classification decisions, and these scores are then used as inputs to a downstream inferential task: a clinical risk classifier feeds a population-level resource-allocation calculation; a fairness audit uses a proxy classifier for an unobserved protected attribute and estimates a disparity in some outcome \cite{kallus2022assessing}; prevalence estimation in biomedical imaging labels a large unlabelled population and regresses the outcome of interest on the predicted class; pattern recognition systems for medical, biometric, and behavioural data feed calibrated risk scores into operational decision rules. In each case the classifier is trained on a small labelled subset $L$ and applied to a much larger unlabelled subset $U$, with the resulting probability used in a downstream regression.

The dominant practice in pseudo-labelling \cite{lee2013pseudo, sohn2020fixmatch, zhang2021flexmatch, wang2023freematch, tarvainen2017mean} converts the calibrated probability to a hard label by thresholding at some confidence level $\tau_{thr}$, typically $0.9$ or $0.95$. The choice has been recognised as empirically consequential \cite{sohn2020fixmatch, zhang2021flexmatch, arazo2020pseudo}, but no quantitative account exists of how thresholding biases a downstream estimand.

This paper supplies that quantitative account in a deliberately restricted regression setting that admits a sharp closed-form analysis. The downstream task is the partial association between a latent binary group indicator $G$ and an outcome $Y$, controlling for a covariate vector $X$, formalised in Section~\ref{sec:setup-data} as the partially linear regression \eqref{eq:plr}; the classifier is trained on a (possibly richer) feature set $W \supseteq X$ to produce a calibrated probability $p = f(W)$ on the unlabelled set $U$.

The companion identification paper \cite{kurbucz2026joe} establishes that the partial association $\tau$ in \eqref{eq:plr} is point-identified by a moment equation in the calibrated probability $p$, with $\Vs=\E[\Var(p\mid X)]$ as the asymptotic-precision quantity. That paper is identification-theoretic; it does not analyse the pseudo-labelling practice that motivated the question, nor does it construct a practitioner-facing decision rule. The present paper takes the identification result as given and develops its consequences for pseudo-labelling.

Concretely, we make three contributions, formalised in Section~\ref{sec:theory} as five predictions (P1--P5) and verified empirically in Section~\ref{sec:experiments}: (i) a closed-form attenuation factor $\kappa(\tau_{thr})$ for hard pseudo-labelling, expressed as a ratio of residual covariance moments and computable on the unlabelled set alone---the central new object of this paper; (ii) a bias-variance regret result for confidence-thresholded abstention (FixMatch-style), with closed-form bias and an explicit variance inflation, neither of which appears in \cite{kurbucz2026joe}; and (iii) a one-page diagnostic procedure, the $(\widehat{V^{*}}, \widehat\kappa)$ decision rule, that operationalises the choice between the supervised, soft, and hard estimators (Algorithm~\ref{alg:diag}). The sandwich-variance result and the calibration-drift sensitivity bound, recalled in Section~\ref{sec:setup-anchor} and Section~\ref{sec:p4} for self-containedness, are taken from the companion paper and applied here in the pseudo-labelling setting.

The paper does not propose a new pseudo-labelling algorithm, predict classification accuracy, or compete with FixMatch on its native loss: it studies a single-pass regression on pseudo-labels and provides a calibration-aware diagnostic for the downstream estimand. Five controlled simulations and a public-data illustration on UCI Adult \cite{kohavi1996scaling} trace the predictions. The remainder is organised as follows. Section~\ref{sec:related} situates the work in the literature; Section~\ref{sec:setup} formalises the setup and recalls the anchor identification result; Section~\ref{sec:theory} develops the five predictions and the diagnostic procedure; Section~\ref{sec:experiments} reports the experiments; Section~\ref{sec:discussion} discusses operational implications and limitations.

\section{Related work}
\label{sec:related}

\noindent
The work draws on four threads. Pseudo-labelling \cite{lee2013pseudo} converts predicted class probabilities of unlabelled examples into hard labels and trains the classifier on the augmented set. FixMatch \cite{sohn2020fixmatch} adds a consistency loss and restricts the pseudo-label to confident examples; FlexMatch \cite{zhang2021flexmatch} and FreeMatch \cite{wang2023freematch} replace the global threshold with class-adaptive ones; CoMatch \cite{li2021comatch} adds graph regularisation; Mean Teacher \cite{tarvainen2017mean} uses an EMA of the classifier as a target. Uncertainty-aware \cite{rizve2021defense} and class-imbalanced \cite{wang2022semi} variants further refine the procedure, while \cite{arazo2020pseudo} document the confirmation-bias dynamics of iterative retraining and \cite{vaneeden2020semisupervised} survey the broader semi-supervised landscape. The unifying observation is that confidence thresholding works empirically but its theoretical justification is heuristic; our contribution is the precise, closed-form bias that any confidence threshold induces in a downstream regression coefficient.

The second thread is the calibration of probabilistic predictors. Post-hoc calibration via Platt scaling \cite{platt1999probabilistic}, isotonic regression \cite{zadrozny2002transforming}, or related constructions \cite{niculescu2005predicting} produces scores that can be interpreted as probabilities. Modern deep networks are systematically overconfident \cite{guo2017calibration}, with newer architectures calibrating better but imperfectly \cite{minderer2021revisiting}, and calibration degrades under distribution shift \cite{ovadia2019can}. None of this work translates calibration quality into a downstream-estimand bias bound, which we do in the restricted regression setting.

The third thread is soft-label learning. \cite{hinton2015distilling} introduced soft-target distillation; \cite{muller2019label} clarified the role of label smoothing. The soft-versus-hard distinction in our setting is structurally similar but in a different functional: there the target is the classifier itself, here the target is a downstream regression coefficient, and the closed-form attenuation factor we derive is new to the pseudo-labelling literature.

The fourth thread, measurement error and misclassification, has a long history in econometrics. \cite{lewbel2007estimation} showed that misclassification of a binary regressor attenuates the estimated effect, and \cite{mahajan2006identification} obtained identification under an instrumental variable. Our setting differs because we observe a calibrated probability rather than a noisy binary label; the identification argument therefore changes, but the attenuation phenomenon is structurally analogous. The semiparametric backbone of the moment estimator is the partially linear model of \cite{robinson1988root} and the cross-fitting machinery of double machine learning \cite{chernozhukov2018double}. The companion working paper \cite{kurbucz2026joe} supplies the underlying identification result, which the present paper takes as given and develops into its pseudo-labelling consequences.

\section{Setup}
\label{sec:setup}

\subsection{Data and the role of two feature sets}
\label{sec:setup-data}
\noindent
We observe two i.i.d.\ samples drawn from the same distribution: a labelled set $L=\{(W_i,X_i,G_i,Y_i)\}_{i=1}^{n_L}$ and an unlabelled set $U=\{(W_i,X_i,Y_i)\}_{i=1}^{n_U}$ with $n_L\ll n_U$. The covariate vector splits into two parts. The \emph{classifier feature set} $W\in\R^{d_W}$ comprises the variables passed to the trained classifier $f:\R^{d_W}\to[0,1]$ to produce the calibrated probability $p_i=f(W_i)$. The \emph{downstream control set} $X\in\R^{d_X}$, with $X\subseteq W$ component-wise, comprises the variables that the downstream regression of $Y$ on $G$ partials out; the point of the setup is that $X$ is allowed to be strictly coarser than $W$. The outcome $Y\in\R$ is observed throughout; the binary group indicator $G\in\{0,1\}$ is observed only on $L$. The structural model is the partially linear regression

\begin{equation}
  \E[Y\mid G, X] \;=\; \mu(X) \;+\; \tau\,G,
  \label{eq:plr}
\end{equation}

\noindent
with $\mu$ unrestricted and $\tau\in\R$ the parameter of interest. The calibration assumption is the \emph{conditional} calibration condition

\begin{equation}
  \E[G\mid p, X] \;=\; p,
  \label{eq:cal}
\end{equation}

\noindent
which is strictly stronger than the marginal condition $\E[G\mid p]=p$ targeted by Platt scaling and isotonic regression \cite{platt1999probabilistic, zadrozny2002transforming, niculescu2005predicting}. The marginal version is enough for classification accuracy; for downstream estimation in the sense of \eqref{eq:plr} the moment equation we use is unbiased only under the conditional version. It is approximately satisfied when the post-hoc calibrator is fitted on a held-out fold of scores that already encode $X$ through $W$, and can fail in two ways: \emph{$X$-dependent miscalibration} (the distribution-shift problem of Section~\ref{sec:p4}, controlled by the sensitivity bound), and \emph{label-leak} ($X$ retains predictive power for $G$ beyond what $p$ encodes, a structural failure not covered by the bound). Both are diagnosed by regressing $G_i$ on $(p_i, X_i)$ on $L$ and testing the joint significance of the $X_i$ coefficients before applying the moment estimator to $U$.

\subsection{The diagnostic quantity}
\label{sec:setup-diagnostic}
\noindent
The two quantities the diagnostic monitors are

\begin{equation}
  r(X) \,=\, \E[p\mid X], \qquad
  \Vs \,=\, \E\bigl[(p-r(X))^{2}\bigr] \,=\, \E[\Var(p\mid X)].
  \label{eq:Vs}
\end{equation}

The conditional mean $r(X)$ is the part of the calibrated score that the downstream controls can explain on their own; $\Vs$ is the conditional variance of $p$ given $X$, the residual stochasticity of the classifier output beyond what the downstream controls absorb. The identification result of \cite{kurbucz2026joe} (their Theorem 1, recalled in Section~\ref{sec:setup-anchor}) shows that $\Vs$ governs both identification of $\tau$ and the asymptotic precision of the estimator.

If the same set is used for both classifier features and downstream controls (so that $W=X$), then $p=f(X)$ is a function of $X$, $r(X)=p$ exactly, and $\Vs=0$, so identification fails. The remedy is to extend $W$ beyond $X$ with informative inputs the downstream regression need not control for, e.g.\ richer occupational codes, free text, image features, or behavioural signals. The classifier learns to use $W\setminus X$ to discriminate $G$, which is the source of the $\Vs>0$ that the moment equation requires.

\subsection{The anchor identification result}
\label{sec:setup-anchor}
\noindent
We briefly recall the identification result of \cite{kurbucz2026joe} that anchors the present paper; the proof is in the companion paper and is not reproduced here. Under \eqref{eq:plr}--\eqref{eq:cal} and the regularity condition $\Vs>0$, the parameter $\tau$ is point-identified by the moment equation

\begin{equation}
  \tau \;=\; \frac{\E\!\left[(2p-1)(Y-m(X))\right]}{2\,\Vs},
  \qquad m(X)=\E[Y\mid X],
  \label{eq:tau-id}
\end{equation}

as established in \cite{kurbucz2026joe} (Theorem 1). The numerator is the covariance of the signed score $z=2p-1$ with the residualised outcome $R=Y-m(X)$; the denominator is twice $\Vs$. The estimator obtained by plugging in sample analogues is $\sqrt{n_U}$-consistent and asymptotically normal with sandwich variance $\E[\psi^{2}]/(2\Vs)^{2}$, where $\psi=(2p-1)(Y-m(X))-2\tau(p-r(X))^{2}$. We will refer to this as the \emph{soft estimator} in what follows. Throughout, oracle quantities $m(X), r(X)$ and the residual $a_{\mathrm{soft}}=p-r(X)$ have feasible cross-fitted analogues $\widehat m, \widehat r$, and $\widehat a_{\mathrm{soft},i}=p_i-\widehat r(X_i)$; the sandwich variance derived for the oracle continues to hold for the cross-fitted version under standard nuisance-rate assumptions, with full DML-style results in \cite{kurbucz2026joe}.

\section{Theory and the diagnostic procedure}
\label{sec:theory}

\noindent
This section develops the five predictions and collects them in a one-page diagnostic procedure (Algorithm~\ref{alg:diag}).

\subsection{Prediction~1: closed-form attenuation under hard pseudo-labelling}
\label{sec:p1}
\noindent
For a fixed threshold $\tau_{thr}\in(0,1)$, define the hard pseudo-label $\widetilde G_i=\1\{p_i>\tau_{thr}\}$ and the hard-score residual $a_{\mathrm{hard}}=\widetilde G-\E[\widetilde G\mid X]$. The hard pseudo-label estimator is the analogue of \eqref{eq:tau-id} with $p$ replaced by~$\widetilde G$:

\begin{equation}
  \widehat{\tau}_{\mathrm{hard}}
  \;=\; \frac{\E_n[(2\widetilde G-1)(Y-m(X))]}
             {2\,\E_n[a_{\mathrm{hard}}^{\,2}]},
  \label{eq:tauhat-hard}
\end{equation}

\noindent
where $\E_n$ denotes the sample mean. We state the result for the oracle version; the feasible version replaces $m,\E[\widetilde G\mid X]$ with cross-fitted estimates.

\begin{theorem}[Attenuation under hard pseudo-labelling]
\label{thm:attenuation}
Under \eqref{eq:plr}--\eqref{eq:cal} and $\Vs>0$,

\begin{equation}
  \plim \;\widehat{\tau}_{\mathrm{hard}}
  \;=\; \kappa(\tau_{thr})\,\tau,
  \qquad
  \kappa(\tau_{thr})
  \;=\; \frac{\E[a_{\mathrm{hard}}\,a_{\mathrm{soft}}]}
             {\E[a_{\mathrm{hard}}^{\,2}]},
  \label{eq:attenuation-formula}
\end{equation}

\noindent
with $a_{\mathrm{soft}}=p-r(X)$. The factor $\kappa(\tau_{thr})$ is the population OLS slope of the (residualised) hard label $\widetilde G$ on the (residualised) soft score $p$. Intuitively, $\kappa(\tau_{thr})$ measures how much of the soft signal in $p$ survives the binarisation at threshold $\tau_{thr}$; values close to one mean the threshold loses little, values close to zero mean it loses almost everything.
\end{theorem}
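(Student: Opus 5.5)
The plan is to compute the probability limits of the numerator and denominator of \eqref{eq:tauhat-hard} separately, then combine them by Slutsky's lemma. Write $R=Y-m(X)$. Because $\E_n$ is a sample mean over i.i.d.\ draws from $U$ and the oracle $m$ and $\E[\widetilde G\mid X]$ are fixed functions, the weak law of large numbers gives $\E_n[(2\widetilde G-1)R]\to\E[(2\widetilde G-1)R]$ and $\E_n[a_{\mathrm{hard}}^2]\to\E[a_{\mathrm{hard}}^2]$. This requires finite second moments of $Y$, which I will assume as in the anchor result. So everything reduces to two population identities, together with the nondegeneracy $\E[a_{\mathrm{hard}}^2]>0$.

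\textbf{Denominator.} It converges to $2\E[a_{\mathrm{hard}}^2]$. This is positive provided $\widetilde G$ is not almost surely a function of $X$, i.e.\ the threshold $\tau_{thr}$ actually splits the conditional distribution of $p$ given $X$ on a set of positive probability. $\Vs>0$ alone does not strictly guarantee this. I will add it as an explicit nondegeneracy condition, or note that it holds whenever $\tau_{thr}$ lies inside the conditional support of $p$ given $X$ with positive probability.

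\textbf{Numerator.} Since $\E[R\mid X]=0$, the constant $-1$ drops out, and we can subtract any function of $X$ from $\widetilde G$. This gives $\E[(2\widetilde G-1)R]=2\E[a_{\mathrm{hard}}R]$.

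Next I decompose $R$ using \eqref{eq:plr} and \eqref{eq:cal}. By iterated expectations, $\E[G\mid X]=\E[\E[G\mid p,X]\mid X]=\E[p\mid X]=r(X)$, so $m(X)=\mu(X)+\tau r(X)$. Hence
\[
R=\bigl(Y-\mu(X)-\tau G\bigr)+\tau\bigl(G-r(X)\bigr).
\]
For the first bracket I need $\E[Y-\mu(X)-\tau G\mid G,p,X]=0$, i.e.\ that $p$ (equivalently $W$) carries no information about $Y$ beyond $(G,X)$. This is the exclusion restriction implicit in reading \eqref{eq:plr} as the structural model and used to derive \eqref{eq:tau-id} in the companion paper. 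I will invoke it in that form. Given it, $\E[a_{\mathrm{hard}}(Y-\mu-\tau G)]=0$, since $a_{\mathrm{hard}}$ is a function of $(p,X)$.

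For the second bracket, conditioning on $(p,X)$ and applying \eqref{eq:cal} replaces $G$ by $p$, so
\[
\E[a_{\mathrm{hard}}(G-r(X))]=\E[a_{\mathrm{hard}}(p-r(X))]=\E[a_{\mathrm{hard}}a_{\mathrm{soft}}].
\]
The numerator therefore converges to $2\tau\E[a_{\mathrm{hard}}a_{\mathrm{soft}}]$, and the ratio gives \eqref{eq:attenuation-formula}.

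\textbf{Main obstacle.} The delicate step is the first bracket: justifying that $Y-\mu(X)-\tau G$ is mean-independent of $p$ given $(G,X)$. Conditional calibration alone does not deliver this. I will state it as the same maintained assumption under which \eqref{eq:tau-id} holds; as a sanity check, the argument with $\widetilde G$ replaced by $p$ must reproduce \eqref{eq:tau-id} with $\kappa=1$.

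\textbf{Interpretive sentence in the statement.} The ratio $\E[a_{\mathrm{hard}}a_{\mathrm{soft}}]/\E[a_{\mathrm{hard}}^2]$ is literally the population OLS slope of the residualised soft score on the residualised hard label. It is not the slope of the hard label on the soft score, which would instead be $\E[a_{\mathrm{hard}}a_{\mathrm{soft}}]/\Vs$. The proof will present $\kappa$ in the former form and flag the wording accordingly.
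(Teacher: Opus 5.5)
Your proposal is correct and follows the paper's proof sketch, with the steps in a slightly different order. The mean-independence $\E[\varepsilon\mid G,p,X]=0$ that you flag as the delicate step is exactly what the paper imports from the companion paper's residual-decomposition lemma; after that, calibration and orthogonality to $\sigma(X)$-measurable functions give $2\tau\E[a_{\mathrm{hard}}a_{\mathrm{soft}}]$ as in your argument. Your nondegeneracy condition $\E[a_{\mathrm{hard}}^{2}]>0$ and your reading of $\kappa$ as the slope of the residualised soft score on the residualised hard label are both correct refinements, which the paper's sketch leaves implicit or words in reverse.
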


\begin{proof}[Proof sketch]
By the residual-decomposition lemma of \cite{kurbucz2026joe} (their Lemma~2), the outcome residual decomposes as $Y-m(X)=\tau(G-r(X))+\varepsilon$ with $\E[\varepsilon\mid G,p,X]=0$. Substituting into the numerator of \eqref{eq:tauhat-hard} and applying the calibration condition gives $\E[(2\widetilde G-1)(Y-m(X))]=\tau\E[(2\widetilde G-1)(p-r(X))]$, which, by orthogonality of $a_{\mathrm{hard}}$ to any $\sigma(X)$-measurable function, equals $2\tau\E[a_{\mathrm{hard}}\,a_{\mathrm{soft}}]$. Dividing by the denominator gives \eqref{eq:attenuation-formula}.
\end{proof}

\begin{remark}[When $\kappa\in(0,1)$]
\label{rem:kappa-range}
The formula \eqref{eq:attenuation-formula} is an unrestricted projection slope. In the canonical pseudo-labelling regime---calibrated score, monotone thresholding $\widetilde G=\1\{p>\tau_{thr}\}$---the hard residual is a coarsened version of the soft residual, so $\E[a_{\mathrm{hard}}\,a_{\mathrm{soft}}] \in (0, \E[a_{\mathrm{hard}}^{\,2}])$ in every DGP we examine. We do not claim $\kappa<1$ as a global bound; the diagnostic in Section~\ref{sec:p6alg} reports the empirical $\widehat\kappa$ directly.
\end{remark}

The structural content of Theorem~\ref{thm:attenuation} is that the attenuation of the hard pseudo-label estimator is computable, before any inference is run, from the joint distribution of $(p,X)$ on the unlabelled set. The factor depends on the threshold and on the informativeness of the score; the closer $\tau_{thr}$ is to a support boundary, the more aggressively $\kappa$ shrinks. Section~\ref{sec:p1emp} verifies this prediction at six thresholds with a Bonferroni-corrected $z$-test.

\subsection{Prediction~2: sandwich variance for the soft estimator}
\label{sec:p2}
\noindent
When the soft estimator is computed on the unlabelled set $U$, the asymptotic distribution is, under \eqref{eq:plr}--\eqref{eq:cal},

\begin{equation}
  \sqrt{n_U}\,\bigl(\widehat{\tau}-\tau\bigr)\;\dto\;\N\!\left(0,\;
  \frac{\E[\psi^{2}]}{(2\Vs)^{2}}\right),
  \label{eq:sandwich}
\end{equation}

\noindent
with $\psi=(2p-1)(Y-m(X))-2\tau(p-r(X))^{2}$ \cite[Theorem~3]{kurbucz2026joe}. The variance scales as $(\Vs)^{-2}$, so even modest reductions in residual score informativeness translate to large precision losses. The sandwich-variance plug-in delivers Wald intervals with nominal coverage in finite samples for the oracle and the cross-fitted feasible estimator alike; Section~\ref{sec:p2emp} verifies this across two decades of $\sigma_{u}$.

\subsection{Prediction~3: bias-variance regret of confidence-thresholded abstention}
\label{sec:p3}
\noindent
FixMatch and related algorithms restrict the pseudo-label to the \emph{confident} subset $\mathcal{C}=\{i:\max(p_i,1-p_i)>\tau_{thr}\}$, and apply a hard pseudo-label inside that subset. The downstream estimator is therefore a hard estimator on a smaller sample. Its mean-square error decomposes as

\begin{equation}
  \mathrm{MSE}
  \;=\; \bigl[\bigl(\kappa_{\mathrm{FM}}(\tau_{thr})-1\bigr)\tau\bigr]^{2}
   \;+\; \Var(\widehat{\tau}_{\mathrm{FM}}),
  \label{eq:bv}
\end{equation}

\noindent
where $\kappa_{\mathrm{FM}}(\tau_{thr})$ is the analogue of $\kappa(\tau_{thr})$ on the confident subset, and the variance term is inflated by $1/|\mathcal{C}|$ relative to the soft baseline. Both components can deteriorate as the threshold becomes more selective: the variance term grows monotonically as $|\mathcal{C}|$ shrinks, while the squared bias may move non-monotonically (reaching zero at thresholds where $\kappa_{\mathrm{FM}}\approx 1$ and growing thereafter), so the total mean-square error becomes variance-driven once the confident set has shrunk to a small fraction of $U$. The soft estimator faces neither cost. This comparison holds for the downstream regression coefficient~$\tau$ and is silent on classification accuracy, the loss that confidence-thresholded methods are designed to optimise. Section~\ref{sec:p3emp} traces the regret across a grid of $\tau_{thr}$ values.

\subsection{Prediction~4: sharp sensitivity bound under calibration drift}
\label{sec:p4}
\noindent
When the source-domain calibrator is applied to a target domain with $\E[G\mid p,X]=p+\eta(p,X)$ for some bounded drift function with $|\eta(p,X)|\leq\delta$, the soft estimator on the target population has asymptotic bias

\begin{equation}
  \plim \;\widehat{\tau} - \tau
  \;=\; \frac{\tau\,\E[(2p-1)\,\eta(p,X)]}{2\Vs}.
  \label{eq:bias-shift}
\end{equation}

Holder's inequality gives the bound $|\plim\widehat{\tau}-\tau|\leq |\tau|\,\delta\,\E\lvert 2p-1\rvert /(2\Vs)$, with the supremum attained at $\eta^{*}(p,X)=\delta\,\sgn(2p-1)$ \cite[Proposition~4]{kurbucz2026joe}. Two implications follow. First, the bound shrinks as $\Vs$ grows: more informative scores produce proportionally smaller worst-case bias for the same drift. Second, calibration shapes that satisfy $\E[(2p-1)\eta] =0$ produce no asymptotic bias; the sample-level bias under such shapes need not vanish in finite samples. Section~\ref{sec:p4emp} verifies the bound across three drift shapes.

\subsection{Prediction~5: $\Vs$-collapse boundary}
\label{sec:p5theory}
\noindent
Theorem~\ref{thm:attenuation} and the sandwich variance \eqref{eq:sandwich} both presuppose $\Vs>0$. The following identification boundary, recalled from \cite{kurbucz2026joe} (Proposition~2), has a sharp pseudo-labelling implication.

\begin{proposition}[$\Vs$-collapse]
\label{prop:collapse}
$\Vs=0$ if and only if $p$ is a deterministic function of $X$. In that case the moment equation \eqref{eq:tau-id} is uninformative about $\tau$: $\E[(2p-1)(Y-m(X))]=0$ for every value of $\tau$.
\end{proposition}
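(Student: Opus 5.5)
The plan has two parts: the equivalence, and then the degeneracy of the moment equation.

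\emph{Equivalence.} By \eqref{eq:Vs}, $\Vs=\E[(p-r(X))^2]$ is the expectation of a nonnegative random variable. Hence $\Vs=0$ if and only if $p=r(X)$ almost surely.

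For ``$\Leftarrow$'', suppose $p=g(X)$ a.s.\ for some measurable $g$. Then $r(X)=\E[g(X)\mid X]=g(X)=p$ a.s., so $\Vs=0$.

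For ``$\Rightarrow$'', $\Vs=0$ gives $p=r(X)$ a.s. Since $r(X)$ is $\sigma(X)$-measurable, $p$ is a deterministic function of $X$. The only care needed is to read ``deterministic function'' as ``almost surely equal to a measurable function of $X$''.

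\emph{Degeneracy of the moment equation.} Suppose $p=r(X)$ a.s. Then $2p-1=2r(X)-1$ is $\sigma(X)$-measurable. Because $p\in[0,1]$, it is bounded, so the products below are integrable provided $Y$ has a finite first moment (assumed implicitly in \eqref{eq:tau-id}). By the tower property,
\begin{equation*}
\E[(2p-1)(Y-m(X))]=\E\bigl[(2r(X)-1)\,\E[Y-m(X)\mid X]\bigr]=0,
\end{equation*}
since $m(X)=\E[Y\mid X]$. This holds for every $\tau$, because $m$ absorbs $\tau$ through $\mu(X)+\tau\E[G\mid X]$. So the numerator of \eqref{eq:tau-id} vanishes identically regardless of the true $\tau$.

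The same conclusion can be reached from the model side. Using the residual decomposition $Y-m(X)=\tau(G-r(X))+\varepsilon$ quoted in the proof of Theorem~\ref{thm:attenuation}, together with calibration \eqref{eq:cal}, the numerator equals $2\tau\Vs$, which is $0$ when $\Vs=0$. This makes explicit that the moment equation reduces to $0=2\tau\cdot 0$ and carries no information about $\tau$.

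I expect no real obstacle. The only delicate point is the measure-theoretic interpretation in the ``$\Rightarrow$'' direction: ``deterministic function'' must be taken up to null sets, with measurability of $r$ guaranteed because it is a conditional expectation. The integrability of $Y$ must also be stated so that the tower step is legitimate.
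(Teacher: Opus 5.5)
Your proof is correct. The paper gives no proof of this proposition; it is recalled from \cite{kurbucz2026joe}. The only in-paper point of comparison is therefore the proof sketch of Theorem~\ref{thm:attenuation}.

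Your second, model-side route is that same mechanism applied to the soft score. The residual decomposition plus calibration give $\E[(2p-1)(Y-m(X))]=2\tau\Vs$ for every distribution in the model, so the collapse is simply the $\Vs=0$ case of \eqref{eq:tau-id}, which reduces to $0=0$.

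Your first route, via the tower property, is more elementary and more general. It uses only $m(X)=\E[Y\mid X]$ and integrability of $Y$. It needs neither the partially linear structure \eqref{eq:plr}, nor calibration \eqref{eq:cal}, nor the exclusion condition $\E[\varepsilon\mid G,p,X]=0$ built into the residual-decomposition lemma.

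What each route buys:
\begin{itemize}
\item The tower route shows the degeneracy is a purely distributional fact about any $\sigma(X)$-measurable score. That is the cleanest justification for the paper's claim that switching learners cannot rescue the $W=X$ design.
\item The model-side route gives the quantitative identity $\E[(2p-1)(Y-m(X))]=2\tau\Vs$. It shows the signal in the numerator degrades continuously as $\Vs\to 0$, not only at the boundary.
\end{itemize}
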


The implication is operational. If the classifier is fit on $W=X$, then $p=f(X)$ has $\Vs=0$ asymptotically and the moment estimator is uninformative. The fix is at the level of feature-set design: $W$ must strictly extend $X$, so that $p$ depends on at least one signal the downstream regression does not partial out. Standard supervised classifiers, when applied with a sufficiently rich $W$, satisfy this; the difficulty is to ensure it by design. Section~\ref{sec:p5emp} demonstrates the collapse and the role of feature-set separation in restoring it.

A complementary route, used widely in the deep-learning calibration literature, is to inject epistemic uncertainty into the score via deep ensembles \cite{lakshminarayanan2017simple} or MC-dropout \cite{gal2016dropout}. Section~\ref{sec:p5emp} examines this: posterior-predictive draws raise empirical $\Vs$ but, because the injected noise is independent of $G$, do not on their own restore bias-correctness; Section~\ref{sec:discussion} returns to the point.

\subsection{The diagnostic procedure}
\label{sec:p6alg}
\noindent
The five predictions combine into a single procedure---the \emph{$(\Vs,\kappa)$ diagnostic}---that a practitioner can run on the unlabelled set, before performing any downstream inference, to decide whether confidence thresholding is safe.

\begin{algorithm}[H]
\small
\DontPrintSemicolon

\caption{The $(\Vs,\kappa)$ diagnostic for pseudo-labelled downstream estimation.}
\label{alg:diag}
\KwIn{unlabelled triples $(W_i, X_i, p_i)_{i\in U}$ where $X\subseteq W$ and $p_i=f(W_i)$ is a calibrated classifier output; candidate threshold grid $\mathcal{T}\subset(0,1)$.}

\KwOut{empirical residual variance $\widehat{V^{*}}$, attenuation factor $\widehat\kappa(\tau_{thr})$ on $\mathcal{T}$, and a recommendation among the supervised, soft, and hard estimators.}

\BlankLine

\textbf{Step 1 (residualise the score).} Fit $\widehat r(X)$ by cross-fitted regression of $p$ on $X$ on $U$ with any supervised learner (we use ridge or random forests in the experiments). Compute $\widehat a_{\mathrm{soft},i}=p_i-\widehat r(X_i)$.\;

\textbf{Step 2 (residual variance).} Compute $\widehat{V^{*}} = n_U^{-1}\sum_{i}\widehat a_{\mathrm{soft},i}^{\,2}$.\;

\textbf{Step 3 (per-threshold attenuation).} \For{each $\tau_{thr}\in\mathcal{T}$}{

  Form the hard pseudo-label $\widetilde G_i=\1\{p_i>\tau_{thr}\}$ and residualise it on $X$ via cross-fitted OLS to obtain $\widehat a_{\mathrm{hard},i}$.\;

  Compute $\widehat\kappa(\tau_{thr}) = \dfrac{n_U^{-1}\sum_{i}\widehat a_{\mathrm{hard},i}\,\widehat a_{\mathrm{soft},i}}{n_U^{-1}\sum_{i}\widehat a_{\mathrm{hard},i}^{\,2}}$.\;

}

\textbf{Step 4 (decision).} Let $\widehat{\mathrm{SE}}_{\mathrm{soft}} = \widehat\sigma_{\psi} / (2\widehat{V^{*}}\sqrt{n_{U}})$ denote the implied sandwich standard error of the soft estimator, with $\widehat\sigma_{\psi}^{2}$ the sample variance of $\widehat\psi_i$.\;

\Indp

\textbf{(i)} If $\widehat{\mathrm{SE}}_{\mathrm{soft}}$ exceeds the standard error of the supervised labelled-only baseline, the moment equation is too noisy: \emph{prefer the supervised baseline}.\;

\textbf{(ii)} Else if $\widehat\kappa(\tau_{thr})$ is far from one for all $\tau_{thr}\in\mathcal{T}$, hard thresholding attenuates the downstream coefficient: \emph{prefer the soft estimator}.\;

\textbf{(iii)} Otherwise either is acceptable; the soft estimator is recommended because it does not depend on the threshold choice.\;

\Indm
\end{algorithm}

\vspace{1em}
The procedure runs in seconds on hundreds of thousands of unlabelled examples. We use it in Section~\ref{sec:experiments} to interpret the behaviour of the soft and hard estimators at every cell of every experiment.

\section{Experiments}
\label{sec:experiments}

\noindent
We report five controlled experiments tied to the predictions of Section~\ref{sec:theory}, and a public-data illustration on the UCI Adult Income dataset. All experiments use seeded R simulations; replication materials are available from the author on request.

\subsection{Common setup}
\label{sec:exp-setup}
\noindent
The synthetic DGP has $X\in\R^{3}$ with independent standard-normal entries, a logistic propensity $r(X)=\sigma(\beta_{r}^{\top}X)$ with $\beta_{r}=(0.6,-0.4,0.3)$, $\mu(X)=\beta_{m}^{\top}X$ with $\beta_{m}=(1.0,0.5,-0.5)$, and $p\mid X\sim\mathrm{Beta}(r(X)\kappa_{0}, (1-r(X))\kappa_{0})$ with $\kappa_{0}=(1-\sigma_{u}^{2})/\sigma_{u}^{2}$; this yields $\E[p\mid X]=r(X)$ and $\Var(p\mid X)=\sigma_{u}^{2}r(X)(1-r(X))$ exactly. The latent group is $G\sim\mathrm{Bernoulli}(p)$, $Y=\mu(X)+\tau G+\varepsilon$ with $\varepsilon\sim\N(0,1)$. Default values: $n_{U}=3000$, $\tau=1$, $\sigma_{u}=0.30$. The parameter $\sigma_{u}$ controls $\Vs$ directly; in real data, the same role is played by the classifier features $W\setminus X$.

\subsection{P1: hard-label attenuation matches the closed form}
\label{sec:p1emp}
\noindent
Table~\ref{tab:p1} reports the empirical hard-pseudo-label estimate $\overline{\widehat{\tau}_{\mathrm{hard}}}$ at six thresholds, alongside Monte Carlo standard errors and the predicted attenuation factor $\kappa(\tau_{thr})$ from \eqref{eq:attenuation-formula}; with $\tau=1$, $\overline{\widehat\tau_{\mathrm{hard}}}$ is also the empirical attenuation. The deviation from prediction is bounded by Monte Carlo noise: a Bonferroni-corrected two-sided $z$-test with family-wise alpha $0.05$ rejects equality in none of the six cells (minimum $p$-value $0.084$).

\begin{table}[H]
\centering
\caption{Hard-label attenuation matches the closed-form prediction ($\tau=1$, 400 independent replications per cell, Monte Carlo standard errors). The Bonferroni-adjusted level is $\alpha/6\approx 0.0083$, and the minimum $p$-value across the six cells is $0.084$, so the family-wise test does not reject equality between the empirical attenuation and the predicted $\kappa(\tau_{thr})$.}

\label{tab:p1}
\begin{threeparttable}
\footnotesize
\begin{tabularx}{0.8\textwidth}{ccccccc}

\toprule
\thead{$\tau_{thr}$} & \thead{$\kappa(\tau_{thr})$} & \thead{$\overline{\widehat{\tau}_{\mathrm{soft}}}$} & \thead{$\overline{\widehat{\tau}_{\mathrm{hard}}}$} & \thead{MC SE} & \thead{$z$} & \thead{$p$-value}\\
\midrule
0.50 & 0.248 & 1.001 & 0.251 & 0.0032 & $+0.87$ & 0.385 \\
0.60 & 0.241 & 0.981 & 0.235 & 0.0037 & $-1.73$ & 0.084 \\
0.70 & 0.223 & 0.999 & 0.226 & 0.0042 & $+0.53$ & 0.595 \\
0.80 & 0.200 & 1.013 & 0.211 & 0.0073 & $+1.54$ & 0.123 \\
0.90 & 0.172 & 1.014 & 0.148 & 0.0195 & $-1.24$ & 0.217 \\
0.95 & 0.153 & 0.995 & 0.188 & 0.0570 & $+0.61$ & 0.539 \\
\bottomrule
\end{tabularx}
\end{threeparttable}
\end{table}

\begin{figure}[H]
\centering
\includegraphics[width=.75\textwidth]{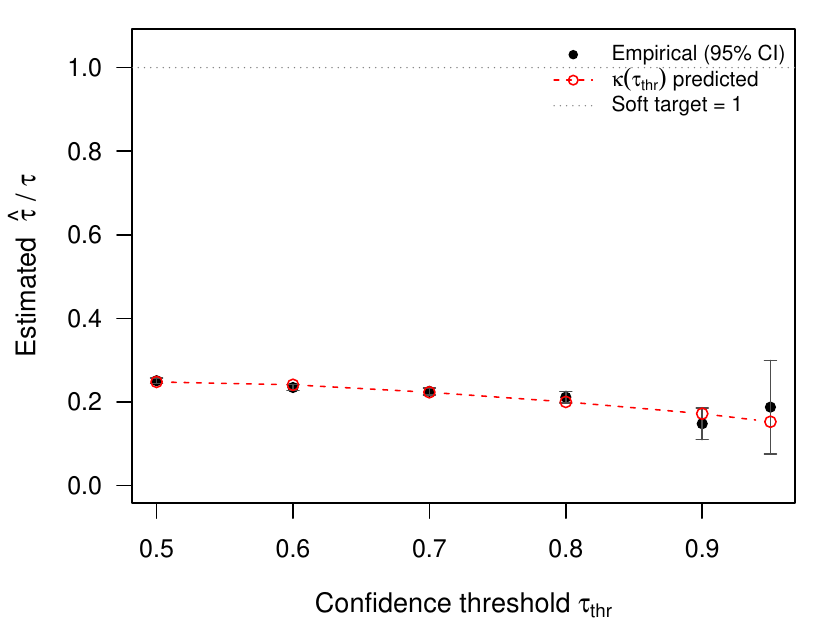}

\caption{Empirical attenuation ratio (filled circles, with 95\% Monte Carlo confidence intervals) tracks the closed-form prediction $\kappa(\tau_{thr})$ (open circles, dashed line) at every threshold. The soft estimator (not shown) is centred on the dotted line at one.}

\label{fig:p1}
\end{figure}

\subsection{P2: sandwich variance and supervised baseline}
\label{sec:p2emp}
\noindent
Table~\ref{tab:p2} varies the score-noise parameter $\sigma_{u}$ over two decades with $n_{U}=2000$, $\tau=1$, and 1000 Monte Carlo replications per cell. The mean sandwich variance recovers the Monte Carlo variance to within $\pm 8\%$ at every value, and the nominal-95\% Wald-interval coverage is between $0.941$ and $0.950$. The last column reports the MSE ratio of the soft estimator to the supervised labelled-only baseline (OLS on $L$ alone, $n_{L}=500$). The soft estimator beats the supervised baseline only in the high-informativeness regime ($\sigma_{u}=0.5$, ratio $1.71$); in low-$\Vs$ regimes the labelled-only baseline wins, which the diagnostic of Algorithm~\ref{alg:diag} flags before any inference is run.

\begin{table}[H]
\centering
\caption{Sandwich variance vs Monte Carlo variance, soft estimator ($n_{U}=2000$, 1000 replications). The MSE ratio in the last column is soft/supervised; values above 1 mean the soft estimator is worse than the labelled-only baseline. Algorithm~\ref{alg:diag} flags the low-$\Vs$ cells in advance.}

\label{tab:p2}
\footnotesize
\begin{tabularx}{0.8\textwidth}{ccccccc}

\toprule
\thead{$\sigma_{u}$} & \thead{$\Vs$} & \thead{sandwich var} & \thead{MC var} & \thead{coverage 95\%} & \thead{MSE ratio} \\
\midrule
0.10 & 0.0022 & 3.888 & 3.991 & 0.948 & 394 \\
0.20 & 0.0088 & 0.291 & 0.313 & 0.941 & 33.4 \\
0.30 & 0.0198 & 0.073 & 0.073 & 0.948 & 7.6 \\
0.40 & 0.0352 & 0.030 & 0.031 & 0.950 & 3.4 \\
0.50 & 0.0550 & 0.016 & 0.016 & 0.944 & 1.71 \\
\bottomrule
\end{tabularx}
\end{table}

\subsection{P3: bias-variance regret of confidence thresholding}
\label{sec:p3emp}
\noindent
Table~\ref{tab:p3} reports the bias and variance of the confidence-thresholded estimator at six thresholds, alongside the predicted bias $(\kappa_{\mathrm{FM}}(\tau_{thr})-1)\tau$. The total mean-square error of the soft baseline is $0.048$. The empirical bias agrees with the closed-form prediction for moderate thresholds (cf. Figure~\ref{fig:p3pred}); at high thresholds the variance and selection instability dominate, and the empirical bias departs from prediction because the moment estimator is itself variable. At $\tau_{thr}=0.95$, where on average only $57$ examples contribute to the regression, the total MSE is $21.9$. Figure~\ref{fig:p3} traces the two components separately.

\begin{table}[H]
\centering
\caption{Confidence-thresholded estimator: bias-variance decomposition ($n_{U}=3000$, 400 repl.). The squared bias and the variance are reported separately; their sum is the empirical mean-square error. The soft baseline MSE is $0.048$.}

\label{tab:p3}
\footnotesize
\begin{tabularx}{0.8\textwidth}{ccccccc}

\toprule
\thead{$\tau_{thr}$} & \thead{$\kappa_{\mathrm{FM}}$} & \thead{$|\mathcal{C}|$} & \thead{bias\textsuperscript{2}} & \thead{variance} & \thead{MSE} & \thead{MSE/MSE soft} \\
\midrule
0.55 & 0.298 & 2538 & 0.447 & 0.008  & 0.454 & 9.4 \\
0.65 & 0.409 & 1662 & 0.164 & 0.019  & 0.183 & 3.8 \\
0.75 & 0.525 & 913  & 0.001 & 0.075  & 0.076 & 1.6 \\
0.85 & 0.634 & 364  & 0.778 & 0.525  & 1.303 & 27.1 \\
0.90 & 0.677 & 180  & 2.970 & 1.752  & 4.722 & 98.2 \\
0.95 & 0.700 & 57   & 6.984 & 14.963 & 21.948 & 456 \\
\bottomrule
\end{tabularx}
\end{table}

\begin{figure}[H]
\centering
\includegraphics[width=.72\textwidth]{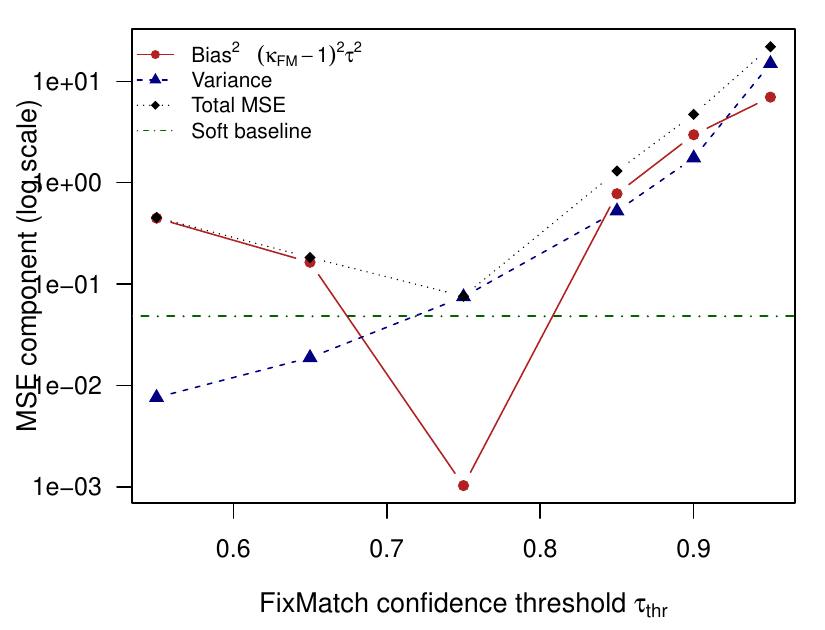}

\caption{Bias-variance decomposition of the confidence-thresholded estimator on log scale, with the soft baseline as a horizontal reference. The bias component is predicted by $\kappa_{\mathrm{FM}}(\tau_{thr})$; the variance component grows with $1/|\mathcal{C}|$.}

\label{fig:p3}
\end{figure}

\begin{figure}[H]
\centering
\includegraphics[width=.55\textwidth]{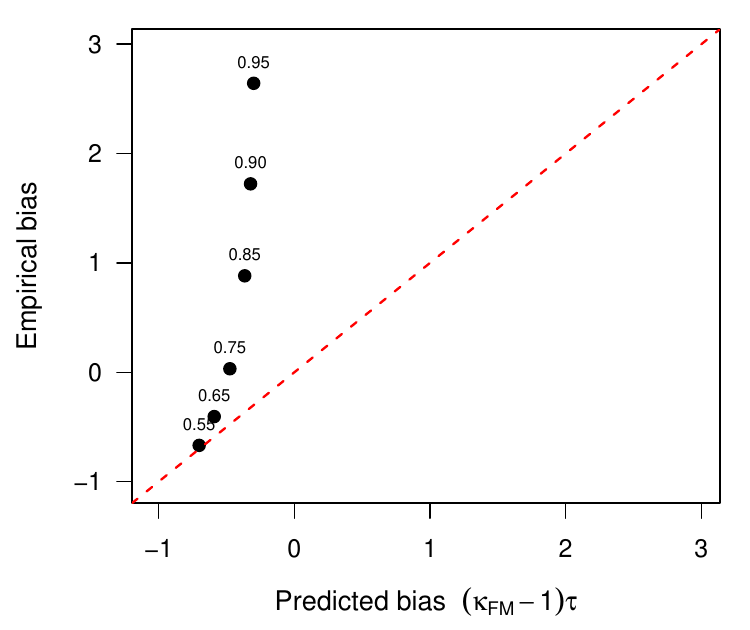}

\caption{Empirical bias of the confidence-thresholded estimator at six thresholds, plotted against the closed-form prediction $(\kappa_{\mathrm{FM}}-1)\tau$. Points lie close to the $y=x$ line at moderate thresholds; at high thresholds the empirical bias is itself variable.}

\label{fig:p3pred}
\end{figure}

\subsection{P4: sensitivity bound under calibration drift}
\label{sec:p4emp}
\noindent
Table~\ref{tab:p4} reports empirical bias under three drift shapes at four magnitudes of $\delta$. The worst-case shape $\eta^{*}(p,X)=\delta\sgn(2p-1)$ stays safely below the bound at every $\delta$, with empirical-to-bound ratio between $0.32$ and $0.42$ at $n_{U}=3000$; the bound is the asymptotic envelope and is not expected to be attained in finite sample. The linear and symmetric shapes give intermediate bias well below the bound. The symmetric shape $\eta=\delta\sin(\pi p)$ produces non-zero bias because $\E[(2p-1)\sin(\pi p)]\neq 0$ at the Beta distribution we use; a centring of the form $\eta=\delta(\sin(\pi p)-c)$ would be required to satisfy the orthogonality and attain $\plim=0$. We report all three shapes uncentered to keep the comparison transparent. Figure~\ref{fig:p4} shows the bound graphically.

\begin{table}[H]
\centering
\caption{Sensitivity to bounded calibration drift ($n_{U}=3000$, 300 replications). The bound is $|\tau|\delta\E|2p-1|/(2\Vs)$, attained at the worst-case shape asymptotically.}

\label{tab:p4}
\footnotesize
\begin{tabularx}{0.6\textwidth}{Lcccc}

\toprule
\thead{shape} & \thead{$\delta$} & \thead{emp bias} & \thead{bound} & \thead{ratio} \\
\midrule
worst-case & 0.05 & $-0.199$ & 0.470 & 0.42 \\
worst-case & 0.10 & $-0.378$ & 0.939 & 0.40 \\
worst-case & 0.15 & $-0.504$ & 1.409 & 0.36 \\
worst-case & 0.20 & $-0.592$ & 1.879 & 0.32 \\
linear     & 0.05 & $-0.088$ & 0.470 & 0.19 \\
linear     & 0.20 & $-0.314$ & 1.879 & 0.17 \\
symmetric  & 0.05 & $-0.065$ & 0.470 & 0.14 \\
symmetric  & 0.20 & $-0.566$ & 1.879 & 0.30 \\
\bottomrule
\end{tabularx}
\end{table}

\begin{figure}[H]
\centering
\includegraphics[width=.72\textwidth]{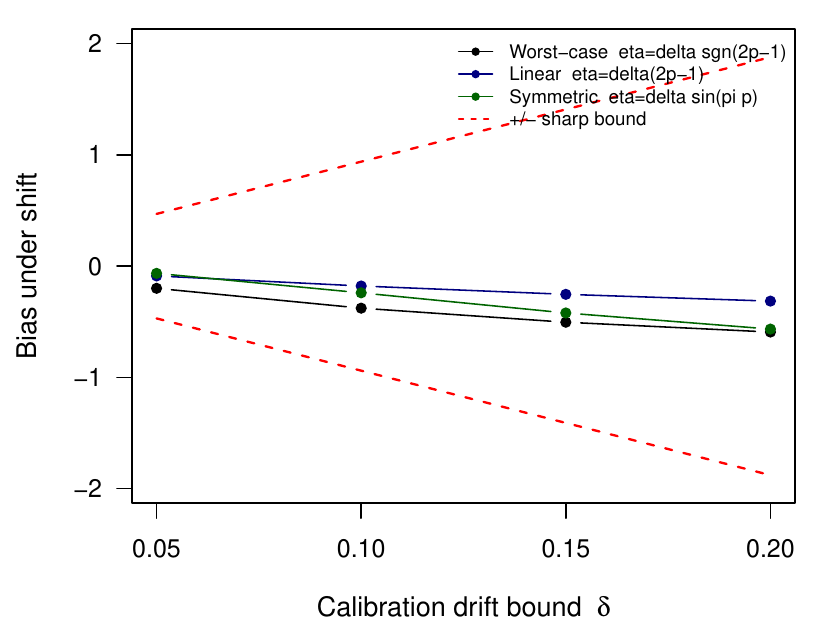}

\caption{Empirical bias under three calibration drift shapes, against $\delta$. The worst-case shape stays below the asymptotic envelope (red dashed lines, $\pm$).}
\label{fig:p4}
\end{figure}

\subsection{P5: $\Vs$-collapse and posterior-predictive recovery}
\label{sec:p5emp}
\noindent
Table~\ref{tab:p5} compares two trained-classifier modes at four labelled-set sizes, with classifier features $W$ taken to coincide with the downstream controls $X$ (the worst case for $\Vs$). The deterministic mode trains a logistic regression on a degree-2 polynomial expansion of $X$ on the first half of $L$ and isotonically recalibrates on the second half; the posterior-predictive mode injects a single Beta-distributed draw with conditional mean $\widehat{p}$ and variance $\sigma_{pp}^{2}\widehat{p}(1-\widehat{p})$, in the spirit of \cite{lakshminarayanan2017simple, gal2016dropout}. Empirical $\Vs$ on the deterministic mode is between $0.0016$ and $0.0065$; on the posterior-predictive mode it is around $0.022$, an order-of-magnitude recovery. Figure~\ref{fig:p5} plots the two trajectories.

\begin{table}[H]
\centering
\caption{$\Vs$ on the observed score, deterministic vs posterior-predictive classifier ($n_{U}=3000$, 250 replications, $\sigma_{pp}=0.30$). Classifier features $W=X$ are deliberately set to coincide with the downstream controls to expose the collapse.}

\label{tab:p5}
\footnotesize
\begin{tabularx}{0.6\textwidth}{ccc}

\toprule
\thead{$n_{L}$} & \thead{$\Vs$ deterministic} & \thead{$\Vs$ posterior-predictive} \\
\midrule
200  & 0.00654 & 0.0252 \\
500  & 0.00381 & 0.0229 \\
1000 & 0.00261 & 0.0218 \\
2000 & 0.00164 & 0.0211 \\
\bottomrule
\end{tabularx}
\end{table}

\begin{figure}[H]
\centering
\includegraphics[width=.65\textwidth]{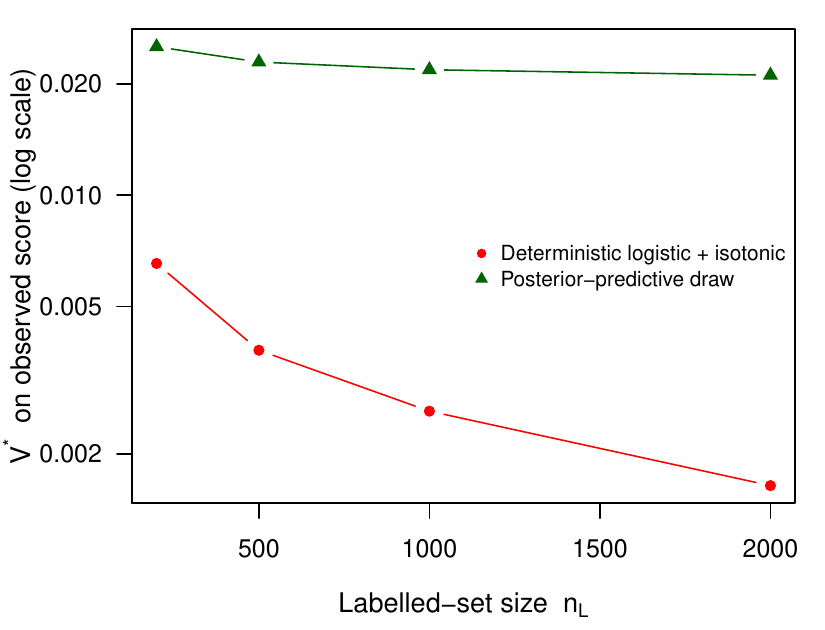}

\caption{$\Vs$ on the observed score plotted against labelled-set size, with classifier features $W$ equal to the downstream controls $X$. The deterministic logistic + isotonic pipeline shrinks $\Vs$ as $n_{L}$ grows, because the classifier becomes a tighter function of $X$; posterior-predictive sampling preserves $\Vs$ at an order of magnitude above. The structural alternative used in Section~\ref{sec:p6emp}---taking $W$ strictly larger than $X$---is the same logic.}

\label{fig:p5}
\end{figure}

Neither mode on its own restores bias-correctness here, because the injected noise in the posterior-predictive scheme is independent of $G$ and so raises $\Vs$ without raising the cross-moment that drives the moment equation. A genuine recovery requires either classifier features $W$ that strictly extend $X$ (the route used in Section~\ref{sec:p6emp}) or classifier-induced uncertainty that is informative about $G$ \cite{lakshminarayanan2017simple}.

\subsection{Real-data illustration: a negative-control diagnostic case on UCI Adult Income}
\label{sec:p6emp}
\noindent
We complete the empirical evaluation with a public-data illustration of a case in which the diagnostic of Algorithm~\ref{alg:diag} recommends against the pseudo-labelled estimator. This is the empirical counterpart of the case made in Section~\ref{sec:p2emp}: when the labelled set is informative enough that the supervised baseline already has small finite-sample MSE, the diagnostic should not route the practitioner to the more elaborate moment estimator. The same diagnostic would route in the opposite direction when conditions favour pseudo-labelled estimation.

We use the UCI Adult Income dataset \cite{kohavi1996scaling}, which contains $30{,}162$ observations after we drop rows with missing values. The latent group indicator $G$ is constructed as the binary high-income marker $\1\{\text{income}>50\text{K}\}$, the outcome $Y$ is the respondent's number of education-years, and the downstream estimand is the partial-association coefficient $\tau$ in $\E[Y\mid X, G]=\mu(X)+\tau\,G$. The dataset naturally accommodates the $W/X$ separation that the framework requires: the classifier feature set $W$ uses the full demographic and employment information available (age, hours-per-week, sex, capital gain and loss, occupation, marital status, work class, relationship, race, and native country), whereas the downstream-control set $X$ is restricted to a parsimonious demographic profile of age, hours-per-week, and sex. We do not interpret $\tau$ causally. The full-sample OLS coefficient on this regression is $\tau_{\mathrm{full}}=2.02$, which we use as the target value the pseudo-labelled estimators are trying to recover. For each labelled-set size $n_{L}\in\{500,1000,2000\}$ we train a calibrated random forest \cite{breiman2001random,wright2017ranger} on $W$, isotonically recalibrate it on a held-out half of $L$, and apply the recalibrated classifier to the unlabelled set $U$. We then compare three downstream estimators: the supervised OLS on $L$ alone, the soft pseudo-labelled estimator that uses the calibrated probabilities directly, and the hard pseudo-labelled estimator that thresholds them at $0.5$. All quantities are averaged over 30 independent random splits.

Table~\ref{tab:realdata} reports the results. The $W/X$ separation produces $\widehat{V^{*}}\approx 0.046$ (positive and stable across $n_{L}$), so the moment equation is well-posed. The supervised baseline beats both pseudo-labelled estimators because the labelled-only OLS already has small finite-sample variance on this problem; the moment estimator's gain from using $U$ is offset by the additional variance of the nuisance-estimation step. The soft estimator's positive finite-sample bias (around $+1.0$ on a target of $2.0$) reflects the difficulty of estimating the nuisance functions $m(X)$ and $r(X)$ accurately on this dataset rather than a failure of the underlying identification: a moderate-quality $\widehat r(X)$ under-removes the part of $p$ explained by $X$, inflating $\widehat a_{\mathrm{soft}}$ in the numerator more than in the denominator. Algorithm~\ref{alg:diag} routes to the supervised baseline in exactly such regimes. The empirical attenuation ratio $|\widehat\tau_{\mathrm{hard}}/\tau_{\mathrm{full}}|\approx 0.92$ at $n_{L}=2000$ implies $\widehat{\kappa}\approx 0.92$, well above the synthetic-DGP value of $0.25$: on Adult the rich~$W$ produces an accurate classifier, hard labels agree with the truth on most examples, and binarisation discards little. The Adult and synthetic numbers therefore play different diagnostic roles---one a regime where binarisation is essentially free, the other a regime where it costs three quarters of the signal.

\begin{table}[H]
\centering
\caption{UCI Adult: estimating the partial association of high-income status with education-years on a parsimonious demographic-control set. Full-sample target $\tau_{\mathrm{full}}=2.02$; 30 random splits per cell. Classifier features $W$ are richer than the downstream controls $X$. Supervised baseline is OLS on $L$ alone.}

\label{tab:realdata}
\footnotesize
\begin{tabularx}{0.925\textwidth}{ccccccc}

\toprule
\thead{$n_{L}$} & \thead{$\widehat{V^{*}}$} & \thead{MSE supervised} & \thead{bias soft} & \thead{MSE soft} & \thead{bias hard} & \thead{MSE hard} \\
\midrule
500  & 0.041 & 0.11 & $+0.87$ & 0.93 & $-0.44$ & 0.27 \\
1000 & 0.044 & 0.03 & $+0.99$ & 1.10 & $-0.37$ & 0.18 \\
2000 & 0.046 & 0.02 & $+1.19$ & 1.47 & $-0.16$ & 0.05 \\
\bottomrule
\end{tabularx}
\end{table}

\begin{figure}[H]
\centering
\includegraphics[width=.7\textwidth]{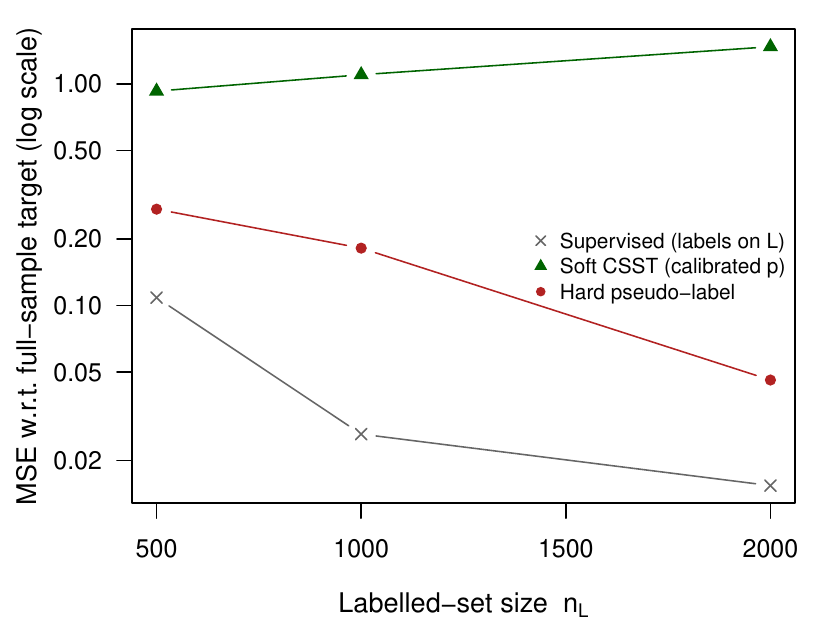}

\caption{UCI Adult: MSE against the full-sample target on a log scale, as a function of labelled-set size $n_{L}$. The supervised labelled-only baseline is the strongest estimator on this problem; the soft and hard pseudo-labelled estimators carry their own bias-variance trade-off. Algorithm~\ref{alg:diag} would flag the supervised baseline as the recommended estimator on this dataset because the conditional class probability is well-explained by even a small labelled set.}

\label{fig:p6}
\end{figure}

\section{Discussion}
\label{sec:discussion}

\noindent
The framework does not endorse the soft estimator unconditionally. When $\widehat{V^{*}}$ is small, the moment equation is noisy and the supervised labelled-only baseline can dominate, as on the UCI Adult data of Section~\ref{sec:p6emp}: the visible finite-sample bias of the soft estimator there is consistent with difficulty in estimating the nuisance functions $m(X)$ and $r(X)$ accurately rather than with a failure of identification. When $\widehat\kappa(\tau_{thr})$ is close to one, hard thresholding loses little signal and is acceptable. Whether the soft estimator is the right choice is therefore a property of the joint distribution of $(p,X)$, computable case by case from the unlabelled set.

The structural rule $X\subsetneq W$ is the single most important design implication. When classifier features coincide with downstream controls, $\Vs=0$ and identification fails for any classifier, including default logistic regression and random forests. Switching learners does not help; what helps is feeding the classifier a richer feature set than the downstream regression conditions on (e.g.\ image, textual or behavioural inputs that the structured regression has no reason to control for). A complementary route is to inject epistemic uncertainty into the score via deep ensembles \cite{lakshminarayanan2017simple} or MC-dropout \cite{gal2016dropout}; posterior-predictive noise raises $\Vs$ but does not by itself restore bias-correctness because the injected noise is independent of $G$ (Section~\ref{sec:p5emp}), so genuinely $G$-informative uncertainty is required---an empirical question best validated on deep-learning benchmarks.

Three pattern-recognition pipelines fit the framework directly: medical imaging risk scores (deep classifier on radiographic features $W$, clinical outcome controlled for demographic $X$); fairness audits with unobserved protected attributes \cite{kallus2022assessing} (proxy classifier on auxiliary $W$, disparity controlled for legitimate $X$); and biometric quality scores (classifier on raw biometric features, operational metric controlled for acquisition conditions). For adaptive-threshold variants \cite{zhang2021flexmatch, wang2023freematch}, the global $\kappa$ is replaced by an average of per-cell $\kappa$s; the comparison concerns the downstream-regression coefficient only, not classification accuracy.

Several limitations remain. The controlled experiments use a tabular DGP and a single self-training round, so iterative dynamics with confirmation bias \cite{arazo2020pseudo} are out of scope; the UCI Adult illustration uses a random forest classifier, and we do not test whether deep-ensemble or MC-dropout variants would alter the relative ordering; and the framework treats the calibration condition \eqref{eq:cal} as exact, leaving endogenous-calibration models to future work.

\section{Conclusion}
\label{sec:conclusion}

\noindent
We have given a calibration-aware diagnostic apparatus for pseudo-labelled regression, building on the identification framework of \cite{kurbucz2026joe} and developing its consequences for the confidence-thresholding practice that pervades modern semi-supervised pipelines. The closed-form attenuation factor $\kappa(\tau_{thr})$, the bias-variance regret of confidence-thresholded abstention, and the $(\widehat{V^{*}}, \widehat\kappa)$ decision rule are new to the present paper; the sandwich variance, the calibration-drift sensitivity bound, and the $\Vs$-collapse boundary are recalled from the companion paper and applied to the pseudo-labelling setting. All five predictions are consistent with controlled simulations to within Monte Carlo noise, and a UCI Adult Income illustration traces them on a public dataset under a standard machine-learning pipeline. The contribution is operational: the practitioner runs the $(\Vs,\kappa)$ diagnostic on the unlabelled set and chooses among the supervised, soft, and hard estimators with quantitative justification.

\bibliographystyle{elsarticle-num}

\bibliography{references}

@misc{kurbucz2026joe,
  title={Identification of Latent Group Effects under Conditional Calibration},
  author={Kurbucz, Marcell T},
  journal={arXiv preprint arXiv:2604.08798},
  year={2026}
}

@inproceedings{lee2013pseudo,
  title={Pseudo-label: The simple and efficient semi-supervised learning method for deep neural networks},
  author={Lee, Dong-Hyun},
  booktitle={Workshop on challenges in representation learning, ICML},
  volume={3},
  number={2},
  pages={896},
  year={2013},
  organization={Atlanta}
}

@inproceedings{sohn2020fixmatch,
  author    = {Sohn, Kihyuk and Berthelot, David and Li, Chun-Liang and Zhang, Zizhao and Carlini, Nicholas and Cubuk, Ekin D. and Kurakin, Alex and Zhang, Han and Raffel, Colin},
  title     = {{FixMatch}: Simplifying semi-supervised learning with consistency and confidence},
  booktitle = {Advances in Neural Information Processing Systems (NeurIPS)},
  year      = {2020}
}

@article{zhang2021flexmatch,
  title={{FlexMatch}: Boosting semi-supervised learning with curriculum pseudo labeling},
  author={Zhang, Bowen and Wang, Yidong and Hou, Wenxin and Wu, Hao and Wang, Jindong and Okumura, Manabu and Shinozaki, Takahiro},
  journal={Advances in neural information processing systems},
  volume={34},
  pages={18408--18419},
  year={2021}
}

@inproceedings{tarvainen2017mean,
  author    = {Tarvainen, Antti and Valpola, Harri},
  title     = {Mean teachers are better role models: Weight-averaged consistency targets improve semi-supervised deep learning results},
  booktitle = {Advances in Neural Information Processing Systems (NeurIPS)},
  year      = {2017}
}

@inproceedings{arazo2020pseudo,
  title={Pseudo-labeling and confirmation bias in deep semi-supervised learning},
  author={Arazo, Eric and Ortego, Diego and Albert, Paul and O’Connor, Noel E and McGuinness, Kevin},
  booktitle={2020 International joint conference on neural networks (IJCNN)},
  pages={1--8},
  year={2020},
  organization={IEEE}
}

@inproceedings{niculescu2005predicting,
  title={Predicting good probabilities with supervised learning},
  author={Niculescu-Mizil, Alexandru and Caruana, Rich},
  booktitle={Proceedings of the 22nd International Conference on Machine Learning},
  pages={625--632},
  year={2005}
}

@inproceedings{guo2017calibration,
  title={On calibration of modern neural networks},
  author={Guo, Chuan and Pleiss, Geoff and Sun, Yu and Weinberger, Kilian Q.},
  booktitle={Proceedings of the 34th International Conference on Machine Learning},
  pages={1321--1330},
  year={2017},
  organization={PMLR}
}

@article{minderer2021revisiting,
  title={Revisiting the calibration of modern neural networks},
  author={Minderer, Matthias and Djolonga, Josip and Romijnders, Rob and Hubis, Frances and Zhai, Xiaohua and Houlsby, Neil and Tran, Dustin and Lucic, Mario},
  journal={Advances in Neural Information Processing Systems},
  volume={34},
  pages={15682--15694},
  year={2021}
}

@inproceedings{platt1999probabilistic,
  author    = {Platt, John C.},
  title     = {Probabilistic Outputs for Support Vector Machines and Comparisons to Regularized Likelihood Methods},
  booktitle = {Advances in Large Margin Classifiers},
  year      = {1999}
}

@inproceedings{zadrozny2002transforming,
  title={Transforming classifier scores into accurate multiclass probability estimates},
  author={Zadrozny, Bianca and Elkan, Charles},
  booktitle={Proceedings of the Eighth ACM SIGKDD International Conference on Knowledge Discovery and Data Mining},
  pages={694--699},
  year={2002}
}

@article{hinton2015distilling,
  title={Distilling the knowledge in a neural network},
  author={Hinton, Geoffrey and Vinyals, Oriol and Dean, Jeff},
  journal={arXiv preprint arXiv:1503.02531},
  year={2015}
}

@article{muller2019label,
  title={When does label smoothing help?},
  author={M{\"u}ller, Rafael and Kornblith, Simon and Hinton, Geoffrey E},
  journal={Advances in Neural Information Processing Systems},
  volume={32},
  year={2019}
}

@article{ovadia2019can,
  title={Can you trust your model's uncertainty? evaluating predictive uncertainty under dataset shift},
  author={Ovadia, Yaniv and Fertig, Emily and Ren, Jie and Nado, Zachary and Sculley, David and Nowozin, Sebastian and Dillon, Joshua and Lakshminarayanan, Balaji and Snoek, Jasper},
  journal={Advances in Neural Information Processing Systems (NeurIPS)},
  volume={32},
  year={2019}
}

@article{lakshminarayanan2017simple,
  title={Simple and scalable predictive uncertainty estimation using deep ensembles},
  author={Lakshminarayanan, Balaji and Pritzel, Alexander and Blundell, Charles},
  journal={Advances in Neural Information Processing Systems},
  volume={30},
  year={2017}
}

@inproceedings{gal2016dropout,
  author    = {Gal, Yarin and Ghahramani, Zoubin},
  title     = {Dropout as a {B}ayesian Approximation: Representing Model Uncertainty in Deep Learning},
  booktitle = {Proceedings of the 33rd International Conference on Machine Learning},
  pages     = {1050--1059},
  year      = {2016}
}

@article{lewbel2007estimation,
  title={Estimation of average treatment effects with misclassification},
  author={Lewbel, Arthur},
  journal={Econometrica},
  volume={75},
  number={2},
  pages={537--551},
  year={2007},
  publisher={Wiley Online Library}
}

@article{mahajan2006identification,
  title={Identification and estimation of regression models with misclassification},
  author={Mahajan, Aprajit},
  journal={Econometrica},
  volume={74},
  number={3},
  pages={631--665},
  year={2006},
  publisher={Wiley Online Library}
}

@article{robinson1988root,
  author  = {Robinson, Peter M.},
  title   = {Root-{N}-consistent semiparametric regression},
  journal = {Econometrica},
  volume  = {56},
  number  = {4},
  pages   = {931--954},
  year    = {1988}
}

@article{chernozhukov2018double,
  author  = {Chernozhukov, Victor and Chetverikov, Denis and Demirer, Mert and Duflo, Esther and Hansen, Christian and Newey, Whitney and Robins, James},
  title   = {Double/debiased machine learning for treatment and structural parameters},
  journal = {The Econometrics Journal},
  volume  = {21},
  number  = {1},
  pages   = {C1--C68},
  year    = {2018}
}

@article{kallus2022assessing,
  title={Assessing algorithmic fairness with unobserved protected class using data combination},
  author={Kallus, Nathan and Mao, Xiaojie and Zhou, Angela},
  journal={Management Science},
  volume={68},
  number={3},
  pages={1959--1981},
  year={2022},
  publisher={INFORMS}
}

@article{breiman2001random,
  author  = {Breiman, Leo},
  title   = {Random forests},
  journal = {Machine Learning},
  volume  = {45},
  pages   = {5--32},
  year    = {2001}
}

@article{wright2017ranger,
  author  = {Wright, Marvin N. and Ziegler, Andreas},
  title   = {ranger: A fast implementation of random forests for high dimensional data in {C}++ and {R}},
  journal = {Journal of Statistical Software},
  volume  = {77},
  number  = {1},
  pages   = {1--17},
  year    = {2017}
}

@inproceedings{kohavi1996scaling,
  author    = {Kohavi, Ron},
  title     = {Scaling up the accuracy of naive-{B}ayes classifiers: a decision-tree hybrid},
  booktitle = {ACM SIGKDD International Conference on Knowledge Discovery and Data Mining},
  pages     = {202--207},
  year      = {1996}
}

@article{wang2023freematch,
  author  = {Wang, Yidong and Chen, Hao and Heng, Qiang and Hou, Wenxin and Fan, Yue and Wu, Zhen and Wang, Jindong and Savvides, Marios and Shinozaki, Takahiro and Raj, Bhiksha and Schiele, Bernt and Xie, Xing},
  title   = {{FreeMatch}: Self-adaptive thresholding for semi-supervised learning},
  journal = {International Conference on Learning Representations (ICLR)},
  year    = {2023}
}

@article{wang2022semi,
  author  = {Wang, Xudong and Wu, Zhirong and Lian, Long and Yu, Stella X.},
  title   = {Debiased learning from naturally imbalanced pseudo-labels},
  journal = {IEEE Conference on Computer Vision and Pattern Recognition (CVPR)},
  year    = {2022}
}

@inproceedings{rizve2021defense,
  author    = {Rizve, Mamshad Nayeem and Duarte, Kevin and Rawat, Yogesh S. and Shah, Mubarak},
  title     = {In Defense of Pseudo-Labeling: An Uncertainty-Aware Pseudo-label Selection Framework for Semi-Supervised Learning},
  booktitle = {International Conference on Learning Representations (ICLR)},
  year      = {2021}
}

@inproceedings{li2021comatch,
  title={{CoMatch}: Semi-supervised learning with contrastive graph regularization},
  author={Li, Junnan and Xiong, Caiming and Hoi, Steven CH},
  booktitle={Proceedings of the IEEE/CVF international conference on computer vision},
  pages={9475--9484},
  year={2021}
}

@article{vaneeden2020semisupervised,
  title={A survey on semi-supervised learning},
  author={Van Engelen, Jesper E and Hoos, Holger H},
  journal={Machine learning},
  volume={109},
  number={2},
  pages={373--440},
  year={2020},
  publisher={Springer}
}
\end{document}